\documentclass[lettersize,journal]{IEEEtran}

\usepackage{amsmath,amsfonts}
\usepackage{array}
\usepackage[caption=false,font=footnotesize]{subfig} 

\usepackage{textcomp}
\usepackage{stfloats}
\usepackage{url}
\usepackage{verbatim}
\usepackage{cite}
\usepackage{graphicx}
\usepackage{float} 
\usepackage[hidelinks]{hyperref}
\usepackage{mathtools}
\usepackage{nicematrix}
\usepackage{balance}
\usepackage{enumitem}
\usepackage{tikz}
\usepackage{textcomp}
\usepackage{bm}
\usepackage[ruled, lined, longend, linesnumbered]{algorithm2e}
\usepackage{colortbl}

\usepackage{amsthm}

\newtheorem{prop}{Proposition}
\newtheorem{remark}{Remark}

\newtheorem{lemma}{Lemma}

\graphicspath{{../figures/}}

\hypersetup{ %
    breaklinks=true, %
    citecolor=black, %
    colorlinks=true, %
    linkcolor=black, %
    urlcolor=blue
 }

\begin{document}

\title{Modulating Retroreflector-Aided UAV-Based FSO/QKD Systems}

\author{
Duy N. Luong,
Duy-Tuan Dao,
Cuong T. Nguyen,~\IEEEmembership{Graduate Student Member,~IEEE,}
Hoang D. Le,~\IEEEmembership{Member,~IEEE,}
and Anh T. Pham,~\IEEEmembership{Senior Member,~IEEE} 
\thanks{\textcopyright 2026 IEEE. Personal use of this material is permitted. Permission from IEEE must be obtained for all other uses, in any current or future media, including reprinting/republishing this material for advertising or promotional purposes, creating new collective works, for resale or redistribution to servers or lists, or reuse of any copyrighted component of this work in other works. DOI: \href{https://doi.org/10.1109/LCOMM.2026.3722531}{10.1109/LCOMM.2026.3722531}}
\thanks{This research is funded by the Ministry of Education and Training under project number B2024.DNA.16}
\thanks{Duy N. Luong and Duy-Tuan Dao are with The University of Danang - University of Science and Technology, Vietnam (email: nhatduylqd@gmail.com; ddtuan@dut.udn.vn).}
\thanks{Cuong T. Nguyen, Hoang D. Le, and Anh T. Pham are with the University of Aizu, Aizuwakamatsu 965-8580, Japan (email: cuong98hp@gmail.com; hoangle@u-aizu.ac.jp; pham@u-aizu.ac.jp).}
\thanks{The corresponding author: Duy-Tuan Dao.}
}

\markboth{IEEE Communications Letters} 
{Shell \MakeLowercase{\textit{et al.}}: A Sample Article Using IEEEtran.cls for IEEE Journals}


\maketitle

\begin{abstract}
Unmanned aerial vehicles (UAVs)-based free-space optics (FSO)/quantum key distribution (QKD) systems require high-precision pointing mechanisms. This increases system complexity and limits rapid deployment for lightweight and energy-constrained UAVs. This paper proposes a modulating retroreflector (MRR)-equipped UAV architecture for BB84-QKD systems that enables simplified yet accurate tracking while relaxing pointing requirements. A realistic quantum channel model is developed, for which we newly derive the channel probability distribution of transmittance (PDT). Capitalizing on the derived channel PDT, several QKD performance metrics are analytically obtained. Numerical results verify the feasibility of the proposed MRR-aided UAV for practical QKD deployment, highlight its effectiveness over conventional UAV-ground systems, and validate the accuracy of the developed analytical framework.
\end{abstract}

\begin{IEEEkeywords}
Free-space optics, quantum key distribution, modulating retro-reflector, unmanned aerial vehicle.
\end{IEEEkeywords}

\section{Introduction}
\label{sec:introduction}

Recent years have witnessed the rapid proliferation of unmanned aerial vehicles (UAVs) in critical applications, such as disaster/emergency response, defense, and surveillance \cite{UAV_survey_2026}. This creates an urgent demand for secure and resilient UAV communication systems. Quantum key distribution (QKD), which offers unconditional security based on the laws of quantum mechanics, has emerged as a promising solution to safeguarding UAV systems \cite{ismail2025finite, Dabiri_QKD_2025}. To this end, UAV-based free-space optics (FSO)/QKD systems have attracted significant research attention worldwide.  

A pressing concern for UAV-based FSO/QKD systems is severe pointing misalignment caused by UAV hovering fluctuations. This substantially degrades the achievable secret key rate (SKR) performance. Acquisition, tracking, and pointing (ATP) mechanisms can alleviate this critical issue at the cost of considerable complexity, weight, and power consumption. This poses significant challenges for compact and energy-constrained UAVs. A potential alternative solution is to replace the ATP unit on small UAVs with a modulating retro-reflector (MRR) \cite{why_MQW_2021}. Extensive studies have demonstrated the effectiveness of MRR-aided UAVs in alleviating pointing constraints in size- and energy-limited classical FSO systems \cite{trinh2021experimental}. 

Unlike classical MRR-aided FSO systems, integrating MRRs into QKD requires a fundamentally different system architecture to properly support QKD protocol operation. In particular, the retro-reflector must encode quantum states (e.g., BB84 polarization states) onto weak coherent pulses or single photons, rather than merely modulating the intensity of a classical interrogating beam. Although MRR is well established in classical UAV-based FSO systems \cite{Dabiri_TWC_2022, dabiri2025novel}, its integration into QKD remains in its infancy \cite{Vallone2015experimental, Rabinovich2018free, ref_8}. Notably, the authors in \cite{Vallone2015experimental} experimentally investigated the MRR-aided satellite QKD, in which both encoding and detection were performed at the ground station (GS), precluding a true MRR-QKD implementation. In \cite{Rabinovich2018free}, an MRR-QKD system using multiple quantum-well modulators was experimentally demonstrated over a simplified channel model. Motivated by \cite{Rabinovich2018free}, the authors in \cite{ref_8} proposed a low-complexity MRR-QKD implementation for the B92 protocol.

It is worth noting that existing studies on MRR-QKD have mainly focused on experimental demonstrations in satellite \cite{Vallone2015experimental} or terrestrial \cite{Rabinovich2018free, ref_8} systems under simplified channel assumptions, i.e., with constant channel losses. When MRR is integrated into UAV-based FSO/QKD, critical impairments, including hovering-induced beam-to-MRR misalignment, MRR orientation fluctuations, time-varying turbulence-induced fading, and angle-of-arrival (AoA) fluctuations, significantly degrade secret key performance. While an analytical framework for UAV-to-ground QKD systems has recently been reported in \cite{Dabiri_QKD_2025}, MRR-induced double-pass propagation results in distinct statistical characteristics of the end-to-end channel transmittance. To our best knowledge, a detailed analytical investigation of the end-to-end channel transmittance distribution and QKD performance of MRR-aided UAV-based FSO/QKD systems over realistic quantum channels remains limited.   

Motivated by this gap, this paper presents a dedicated analytical framework to model and accurately estimate the performance of QKD in MRR-aided UAV-based FSO/QKD systems. Notably, the main contributions are threefold. First, we propose a practical MRR-based QKD architecture for UAV-based FSO/QKD systems. Secondly, we develop a comprehensive analytical framework accounting for realistic quantum channel conditions. It allows the derivation of quantum bit error rate (QBER) and secret key rate (SKR) performance. Finally, we provide detailed insights into the impact of two-phase MRR channels on system performance.  

The remainder of this paper is organized as follows. Section~\ref{sec:system_model} presents system and channel models. In Section~\ref{sec:performance_analysis}, we analyze QKD performance. Section~\ref{sec:results} provides simulation results, followed by conclusions in Section~\ref{sec:conclusion}.

\section{System and Channel models}
\label{sec:system_model}

\subsection{System Descriptions}
\label{sec:system_desp}

\begin{figure}[t]
    \centering
    \includegraphics[width=\columnwidth]{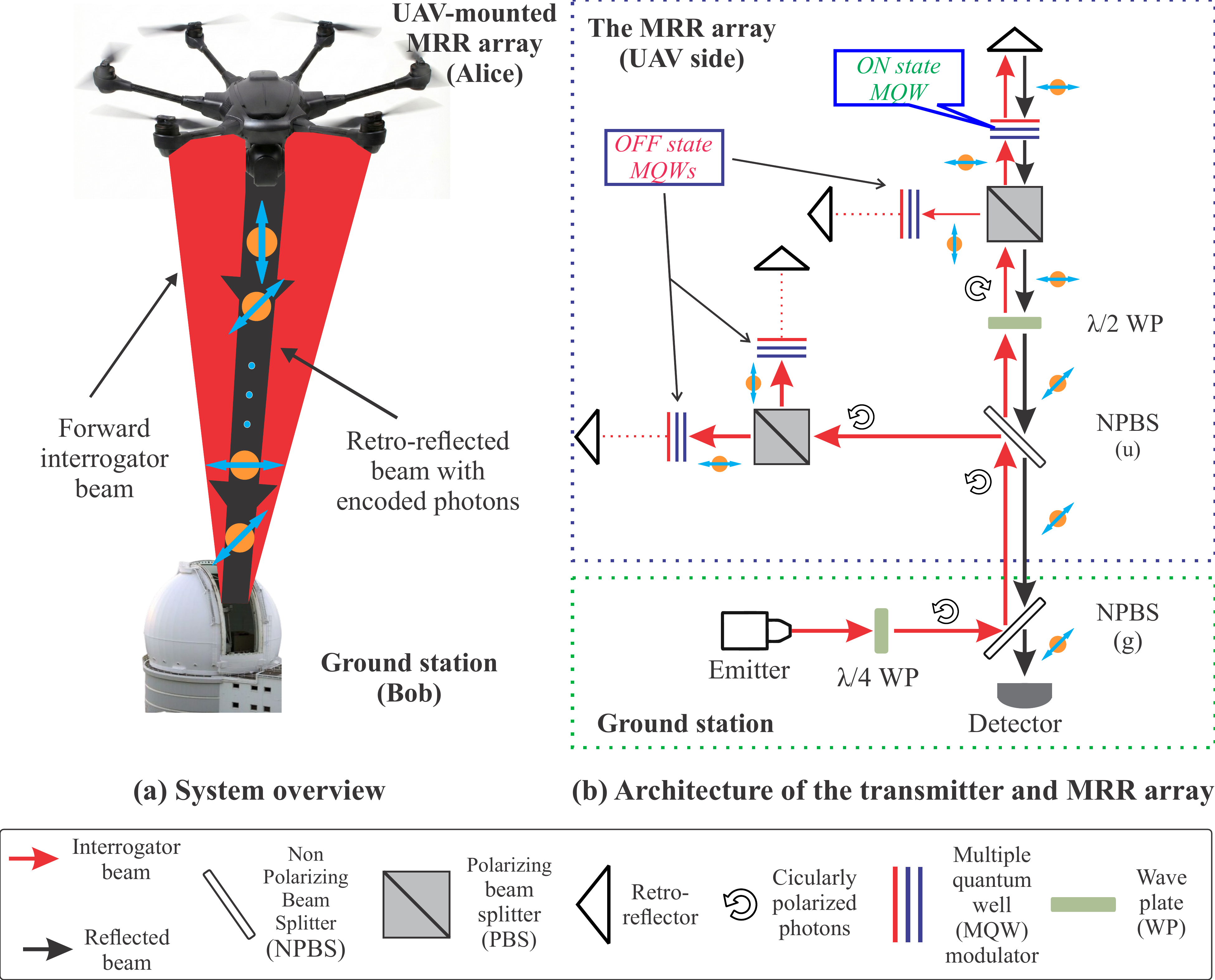}
    \caption{\textcolor{black}{The considered MRR-aided UAV-based QKD/FSO systems.}}
    \label{fig:System}
\end{figure}
The considered QKD system between a GS (Bob) and a UAV (Alice) is illustrated in Fig.~\ref{fig:System}(a). The UAV is equipped with an MRR, where Bob's laser interrogator serves as a photon source for the UAV. Particularly, Bob's laser interrogator emits a high-intensity optical pulse containing unencoded and circularly polarized photons toward Alice's MRR. Then, Alice's MRR encodes quantum states into photons and retro-reflects them. The MRR structure employed for the BB84 QKD protocol is depicted in Fig.~\ref{fig:System}(b). The MRR is configured as an array of four elements. Each contains a multiple quantum well (MQW) modulator\footnote{MRRs are well-suited for size, weight, and power (SWaP)-constrained \text{UAVs}, compared to liquid crystal (LC) or micro-electro-mechanical (MEMs) devices; MQW modulators offer significantly higher speeds \cite{why_MQW_2021}.} and a polarizer oriented to one of the BB84 polarization states (H, V, $\pm 45^\circ$).

The incoming circularly polarized interrogating beam enters the MRR array and is equally divided by a non-polarizing beam splitter (NPBS) into two optical paths, corresponding to two BB84 polarization bases, i.e., rectilinear (H-V) and diagonal ($\pm 45^\circ$) bases. Then, a polarizing beam splitter (PBS) in each path separates the circularly polarized beam into the corresponding linear polarization states. Here, for the diagonal basis path, a half-wave plate (HWP) oriented at $22.5^\circ$ precedes the PBS, rotating the retro-reflected H/V polarizations into $\pm 45^\circ$ states \cite{Rabinovich2018free}. Finally, Alice randomly\footnote{A quantum random number generator (RNG) is implemented to avoid RNG manipulation attacks.} selects one of four MRR elements to encode the quantum state, while the remaining three MRR elements do not retro-reflect the signal. Figure~\ref{fig:System}(b) shows an example, where the MRR corresponding to the diagonal $+45^\circ$ basis is selected and retro-reflects the encoded quantum state back to Bob.

To cope with the photon-number-splitting attack, we consider the two-decoy-state version of the BB84 protocol \cite{trinh2025optical}. Therein, Alice adjusts the MQW double-pass loss to create decoy pulses with the desired mean photon number \cite{Rabinovich2018free}. Moreover, we assume that Alice actively monitors the incident angle of the interrogating beam to avoid Trojan-horse and off-axis attacks \cite{ref_8}.
Also, the properties of the four MRRs are identical to avoid information leakage from photons' other degrees of freedom. Finally, Alice can utilize the photodetection capability of MQWs to determine whether the interrogating beam is controlled by Eve \cite{Rabinovich2018free}.

\subsection{Channel Model}

The end-to-end channel transmittance is modeled as $\tau_\text{total} = \eta \tau_\text{MRR} \tau_\text{l} I_\text{a} \tau_\text{p} \tau_\text{FoV}$, where $\eta$ is the Bob's detector efficiency, $\tau_\text{MRR}$ is the MRR reflectance, $\tau_\text{l}$ is the deterministic loss over the atmosphere, $I_\text{a}$ is the random intensity fluctuation caused by atmospheric turbulence, $\tau_\text{p}$ is the geometric and misalignment loss (GML), and $\tau_\text{FoV}$ is the AoA fluctuation. We assume that these channel coefficients are independent of each other\footnote{Since AoA fluctuation is mainly from the receiver-side FoV coupling and stabilization process, its correlation with UAV angular fluctuations is weak. Thus, $\tau_\text{FoV}$ is assumed to be independent of the UAV angular fluctuation \cite{Dabiri_QKD_2025}.}. Further details of them are as follows.

\subsubsection{Photon Over Atmospheric Channels}
The photon loss for each path is due to the atmospheric absorption and scattering, which can be modeled by the Beer-Lambert law as $\tau_{\text{l},i} = \exp(-\beta_i Z)$ \cite{trinh2021experimental}, where $Z$ is the link distance, {\color{black}$\beta_i = \frac{ \beta_{{\rm dB}, i} }{10^4 \log_{10} e}$} is the atmospheric extinction coefficient, and $i \in \{1, 2\}$ denote ground-to-UAV and UAV-to-ground links, respectively. The atmospheric loss over GS-UAV-GS paths is computed as $\tau_\text{l} = \tau_{\text{l},1} \tau_{\text{l},2}$. In addition, atmospheric turbulence causes random fluctuations in intensity. Here, $I_\text{a} = I_{\text{a},1} I_{\text{a},2}$. {\color{black} Let $\sigma_{\text{R},i}^2$ be the Rytov variance. In this work, we assume the weak turbulence condition, which is valid when $\sigma_{\text{R},i}^2 < 1$. } Thus, $I_{\text{a},i}$ is modeled by the log-normal distribution, i.e., \cite{trinh2021experimental}
\begin{align}
    \label{tur_1}
    f_{I_{\text{a},i}} (I_{\text{a},i}) = \frac{1}{I_{\text{a},i} \sqrt{2 \pi \sigma_{\text{R},i}^2}} \exp \left[ - \frac{\left(\ln{I_{\text{a},i}} + \frac{\sigma_{\text{R},i}^2}{2} \right)^2}{2 \sigma_{\text{R},i}^2} \right],
\end{align}
{\color{black}where $\sigma_{\text{R},1}^2$ and $\sigma_{\text{R},2}^2$ can be calculated as a function of GS height $Z_\mathrm{{hg}}$, elevation angle $\theta_{\mathrm{elv}}$, wind speed $V$, and refractive index structure $C_\mathrm{n}^2 \left( 0 \right)$ as in \cite[(14)]{ata2022performance} and \cite[(12)]{ata2022performance}, respectively. Moreover, $I_{\text{a},1}$ and $I_{\text{a},2}$ are two correlated random variables with a correlation coefficient $\rho$.}

\subsubsection{Effects of MRR on QKD Systems}
\label{sec:MRR_coefficient}
In this study, a passive corner-cube retro-reflector (CCR) is adopted for the MRR array. Each CCR consists of three perpendicular triangular mirrors. This leads to the photon loss, denoted as $\eta_\theta$, which is modeled by\cite[(33)]{Dabiri_TWC_2022}     
\begin{align}
    \label{eta_mrr1}
    f_{\eta_\theta}(\eta_\theta) \!\! \simeq \!\! \frac{1}{\eta_\theta \sqrt{2 \pi \ln \!\! {\left(1 \!\! + \!\! \frac{\sigma_\theta^2}{\mu_\theta^2}\right) } } } \!\! \exp{ \left[ - \frac{ \ln^2 \!\! { \left( \frac{\eta_\theta \sqrt{\mu_\theta^2 + \sigma_\theta^2}}{\mu_\theta^2} \right)} }{2 \ln{\left(1 + \frac{\sigma_\theta^2}{\mu_\theta^2}\right)} } \right]},
\end{align}
where $\mu_\theta$ and $\sigma_\theta^2$ are respectively the mean and variance of $\eta_\theta$, which can be derived as a function of the standard deviation (SD) of the UAV's angular fluctuation \cite[Table~II]{Dabiri_TWC_2022}. Other components also introduce internal photon loss, denoted as $\tau_\mathrm{int} = L_{\mathrm{NPBS}, \text{u}}^2 L_\text{PBS} L_\text{MQW}$. Here, $L_{\mathrm{NPBS}, \text{u}}$, $L_\text{PBS}$, and $L_\text{MQW}$ present the NPBS splitting factors at Alice (UAV), the PBS splitting ratio, and the MQW double-pass loss, respectively. Then, the MRR reflectance is given as $\tau_\text{MRR} = \eta_\theta \tau_\mathrm{int}$.

\subsubsection{Geometric and Misalignment Loss} 

The GML coefficient is given as $\tau_\text{p} = \tau_{\text{p},1} \tau_{\text{p},2}$, where $\tau_{\text{p},1}$ and $\tau_{\text{p},2}$ represent for forward and retro-reflected links, respectively. In the forward link, imperfect tracking leads to the pointing error with an SD of $\sigma_{\theta_e}$. The radial displacement between the beam center and the MRR aperture center is then given as $d_\text{p} = \sqrt{d_{\text{p},x}^2 + d_{\text{p},y}^2}$, where $d_{\text{p},x} \simeq Z \theta_{\text{e},x} $ and $d_{\text{p},y} \simeq Z \theta_{\text{e},y} $ with $\theta_{\text{e},x} \sim \mathcal{N}(0, \sigma_{\theta_e}^2)$ and $\theta_{\text{e},y} \sim \mathcal{N}(0, \sigma_{\theta_e}^2)$. Assuming the Gaussian beam profile and that the interrogating beam footprint is much larger than the MRR aperture area $A_\text{r}$, the probability that a photon falls within the MRR aperture can be approximated as\cite[(5)]{Dabiri_TWC_2022}
\begin{align}
    \label{eqn:tau_1}
    \tau_{\text{p},1} \simeq \frac{2A_\text{r}}{\pi w_{z,\text{1}}^2} \exp \left(-\frac{2d_\text{p}^2}{w_{z,\text{1}}^2} \right),
\end{align}
where $w_{z,\text{1}} = w_0 \sqrt{1+ \left(\frac{\lambda Z}{\pi w_0^2}\right)^2}$ is the interrogating beam waist at the distance $Z$, $w_0$ is the beam waist at $Z = 0$, and $\lambda$ is the optical wavelength. The distribution of $d_\text{p}$ is well approximated by Rayleigh, from which the PDT of $\tau_{\text{p},1}$ can be obtained as\cite[(37)]{Dabiri_TWC_2022} 
\begin{align}
f_{\tau_{\text{p},1}} (\tau_{\text{p},1}) = K \left(\frac{\pi w_{z,\text{1}}^2}{2 A_\text{r}}\right)^K \tau_{\text{p},1}^{K-1}, 
\end{align}
where $K = \frac{w_{z,\text{1}}^2}{4 Z^2 \sigma_{\theta_e}^2}$, and $\tau_{\text{p},1} \in \left(0, \frac{2A_\text{r}}{\pi w_{z,\text{1}}^2}\right]$.
In the retro-reflected link, it is seen that MRR reflects the laser signal at the same input angle. Thus, the center of the retro-reflected beam is located approximately at that of the receiver’s aperture \cite{Dabiri_TWC_2022}. Using the grid-based discretization scheme, the photon capture probability at Bob's detector is given as\cite[(17)]{Dabiri_QKD_2025}
\begin{align}
    \label{grid_p_2}
    \tau_{\text{p},2} \!\! \simeq \!\! \sum_{j=1}^{N_\text{g}} \frac{2 \Delta x}{\sqrt{2 \pi} w_{z,\text{2}}} \text{erf} \!\! \left(\sqrt{\frac{2 (r_\text{g}^2 - x_j^2)}{w_{z,\text{2}}^2}} \right) \exp \!\! \left(-\frac{2x_j^2}{w_{z,\text{2}}^2} \right),
\end{align}
where the receiver's aperture interval $[-r_\text{g}, r_\text{g}]$ is equally divided into $N_\text{g}$ segments with the size of $\Delta x = 2r_\text{g}/N_\text{g}$, and $x_j$ is the center of the $j$-th segment. Also, $w_{z,\text{2}}$ is the retro-reflected beam waist, in which $w_{z,\text{2}} \simeq \xi_\text{div,2} Z$ with $\xi_\text{div,2}$ the MRR's divergence half-angle found in \cite[(10)]{Liu_MRR_AO_2017}.

\subsubsection{Aperture Coupling Efficiency}
Although the receiver aperture collects photons, only those within the quantum detector's field-of-view (FoV) are decoded. Here, $\tau_\text{FoV}$ corresponds to the probability that a single photon is successfully detected by a single-photon avalanche diode (SPAD) within its FoV. The PDT of $\tau_\text{FoV}$ is expressed as\cite[(35)]{Dabiri_QKD_2025}
\begin{align}
    f_{\tau_\text{FoV}} (\tau_\text{FoV}) = P_\mathrm0 \, \delta(\tau_\text{FoV}) + P_\mathrm1 \delta \left( \tau_\text{FoV} - 1 \right), 
\end{align}
where $P_\mathrm0 = \exp\!\left(-\dfrac{\theta_{\mathrm{FoV}}^{2}}{2\sigma_{\mathrm{AoA}}^{2}}\right)$, $P_1=1-P_0$, $\delta(\cdot)$ denotes the Dirac delta function, $\theta_{\mathrm{FoV}}$ and $\sigma_{\mathrm{AoA}}^2$ are respectively the detector's half-angle FoV and variance of angular misalignment.
\section{Performance Analysis}
\label{sec:performance_analysis}

\subsection{Photon Retro-Reflected Rate}
\label{sec:phton_rr_rate}

The average number of retro-reflected photons per pulse by the MRR array is calculated as
\begin{align}
    \mu = \mu_\text{emit} {\color{black}L_{\mathrm{g,NPBS_{R}}}} \tau_{\text{l},1} \mathbb{E}[\tau_\text{MRR}] \mathbb{E}[I_{\text{a},1}] \mathbb{E}[\tau_{\text{p},1}],  
\end{align}
where {\color{black}$L_{\mathrm{g,NPBS_{R}}}$} is the reflectance ratio of the GS's NPBS, $\mathbb{E}[\cdot]$ is the mean value, and $\mu_\text{emit}$ is the mean emitted photon number per pulse. It is computed as $\mu_\text{emit} = \frac{P_\text{emit} \lambda}{f_\text{rep} h c}$, where $P_\text{emit}$ is the emitted power, $f_\text{rep}$ is the pulse repetition rate, $h$ is the Planck’s constant, and $c$ is the light velocity. Here, $\mathbb{E}[\tau_\text{MRR}] = \mu_\theta \tau_\text{int}$, $\mathbb{E}[I_{\text{a},1}]$ = 1, and $\mathbb{E}[\tau_{\text{p},1}] = \frac{K}{K+1} \frac{2A_\text{r}}{\pi w_{z,\text{1}}^2}$.

\begin{remark}
To facilitate the performance analysis, we model Alice's MRR array as an effective weak coherent photon source to Bob's detector. The corresponding representative average photon number generation and the effective channel transmittance are denoted by $\mu_\text{n}$ and $\tau$, respectively.
\end{remark}

Here, the composite forward channel transmittance is given as $\tau_1 = \tau_{1,\text{det}} \tau_{1,\text{rand}}$, where 
{\color{black}
$\tau_{1,\text{det}} = L_{\mathrm{g,NPBS_{R}}} \tau_\mathrm{l,1} \tau_\mathrm{int}$, and $\tau_{1,\text{rand}} = I_\mathrm{a,1} \eta_\mathrm{\theta} \tau_\mathrm{p,1}$.
Then, $\mu_\text{n} = \mu_\text{emit} \tau_{1,\text{det}} \mathbb{E}[\tau_{1,\text{rand}}]$.
Noting that by changing $\tau_\mathrm{int}$ via adjusting $L_\text{MQW}$, Alice can obtain the desired mean photon number for signal and decoy pulses. 
Finally, $\tau$ is given as $\tau = \eta \frac{\tau_{1,\text{rand}}}{\mathbb{E}[\tau_{1,\text{rand}}]} I_{\text{a},2} \tau_{\text{p},2} \tau_{\text{l},2} L_{\mathrm{g,NPBS_{T}}} \tau_\text{FoV}$, where $L_{\mathrm{g,NPBS_{T}}}$ is the transmittance ratio of the GS's NPBS.
}

\subsection{Quantum Bit-error Rate (QBER)}
\label{sec:qber}

The average QBER over the effective channel transmittance $\tau$ is expressed by \cite[(14)]{Yudai} 
\begin{align}
\label{QBER_exact}
    \langle E_{\mu_\text{n}} \rangle =
    \frac{\displaystyle\int_{0}^{\infty} E_{\mu_\text{n}} Q_{\mu_\text{n}}(\tau) f_{\tau}(\tau) \mathrm{d} \tau}
         {\displaystyle\int_{0}^{\infty} Q_{\mu_\text{n}}(\tau) f_{\tau}(\tau) \mathrm{d} \tau},
\end{align}
where $E_{\mu_\text{n}} Q_{\mu_\text{n}}(\tau)$ and $Q_{\mu_\text{n}}(\tau)$ are found in \cite[(15)]{Yudai} and \cite[(16)]{Yudai}, respectively. To calculate \eqref{QBER_exact}, we need to find $f_{\tau}(\tau)$. 

\begin{lemma}
    The closed-form of the PDT for the effective channel transmittance $\tau$ is computed as  
    \begin{align}
        \label{eqn:eff_channel}
        f_{\tau}(\tau) = P_{0} \delta(\tau) + P_1 C_4 \tau^{K-1}
        Q\!\left(\frac{\ln (\tau) + C_5}{\sqrt{C_1}}\right),
    \end{align}
    where $Q(\cdot)$ is the tail distribution function of the Gaussian distribution. Additionally, $s_\mathrm{m}^2 = \ln \left( 1 + \frac{\sigma_\theta^2}{\mu_\theta^2} \right)$, ${\color{black}C_1 = s_\mathrm{m}^2 + \sigma_\mathrm{R,1}^2 + \sigma_\mathrm{R,2}^2 + 2 \rho \sigma_\mathrm{R,1} \sigma_\mathrm{R,2}}$, ${\color{black}C_2 = 0.5( s_\mathrm{m}^2 + \sigma_\mathrm{R,1}^2 + \sigma_\mathrm{R,2}^2)}$, $C_3 = \kappa \eta \tau_\mathrm{p,2} \tau_\mathrm{l,2} {\color{black}L_{\mathrm{g,NPBS_{T}}}}$ with $\kappa = (K+1)/K$, $C_4 = K\,C_3^{-K} \exp \left(K^2 C_1/2 + K C_2 \right)$, and $C_5 = -\ln C_3 + K C_1 + C_2$.      
\end{lemma}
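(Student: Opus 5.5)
The plan is to reduce $\tau$ to the product of a single log-normal variable and a power-law variable, then condition on $\tau_\text{FoV}$.

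\textbf{Step 1 (rewrite $\tau$).} Substitute $\tau_{1,\text{rand}} = I_{\text{a},1}\eta_\theta\tau_{\text{p},1}$ and $\mathbb{E}[\tau_{1,\text{rand}}] = \mathbb{E}[I_{\text{a},1}]\,\mu_\theta\,\mathbb{E}[\tau_{\text{p},1}] = \mu_\theta \frac{K}{K+1}A_0$, where $A_0 = \frac{2A_\text{r}}{\pi w_{z,1}^2}$. This uses the independence assumption and $\mathbb{E}[I_{\text{a},1}]=1$. Treating $\tau_{\text{p},2}$ (the grid sum) and $\tau_{\text{l},2}$ as deterministic, we obtain
\begin{align*}
\tau = C_3\, X\, U\, \tau_\text{FoV}, \qquad X = \frac{\eta_\theta}{\mu_\theta} I_{\text{a},1} I_{\text{a},2}, \qquad U = \frac{\tau_{\text{p},1}}{A_0}.
\end{align*}
By the GML PDT, $U$ has density $K u^{K-1}$ on $(0,1]$.

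\textbf{Step 2 (law of $X$).} From \eqref{eta_mrr1}, $\ln(\eta_\theta/\mu_\theta) \sim \mathcal{N}(-s_\mathrm{m}^2/2,\, s_\mathrm{m}^2)$. From \eqref{tur_1}, $\ln I_{\text{a},i} \sim \mathcal{N}(-\sigma_{\text{R},i}^2/2,\, \sigma_{\text{R},i}^2)$. I would model $(\ln I_{\text{a},1}, \ln I_{\text{a},2})$ as jointly Gaussian with correlation $\rho$, independent of $\eta_\theta$. Then $\ln X \sim \mathcal{N}(-C_2, C_1)$, since the variance picks up the cross term $2\rho\sigma_{\text{R},1}\sigma_{\text{R},2}$.

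\textbf{Step 3 (density of $Y = XU$).} Condition on $X = x$. Then $Y$ has density $K y^{K-1} x^{-K}$ on $(0, x]$, so
\begin{align*}
f_Y(y) = K y^{K-1}\,\mathbb{E}\!\left[X^{-K}\mathbf{1}\{X \ge y\}\right].
\end{align*}
Write $X = e^{Z}$ with $Z \sim \mathcal{N}(-C_2, C_1)$ and complete the square in $e^{-Kz}\phi(z)$. The exponential factor becomes $\exp(K^2C_1/2 + KC_2)$, and the Gaussian is shifted to mean $-C_2 - KC_1$. The truncated integral is then
\begin{align*}
Q\!\left(\frac{\ln y + C_2 + KC_1}{\sqrt{C_1}}\right).
\end{align*}
This completing-the-square computation is the main technical step, and I expect it to be the main obstacle only in the sense of sign and bookkeeping care. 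The modeling assumption that the correlated log-normals have jointly Gaussian logarithms is what makes $X$ exactly log-normal.

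\textbf{Step 4 (scaling and FoV mixture).} Set $W = C_3 Y$. Then $f_W(\tau) = C_3^{-1} f_Y(\tau/C_3)$, which produces the factor $C_3^{-K}$ and replaces $\ln y$ by $\ln\tau - \ln C_3$. This yields $C_4\tau^{K-1}Q\big((\ln\tau + C_5)/\sqrt{C_1}\big)$. Finally, $\tau_\text{FoV} \in \{0,1\}$ with probabilities $P_0$ and $P_1$, independent of the rest. Hence $f_\tau = P_0\delta(\tau) + P_1 f_W(\tau)$, which is \eqref{eqn:eff_channel}.
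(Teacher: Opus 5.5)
Your proof is correct and takes the paper's route: fold $I_{\text{a},1}$, $I_{\text{a},2}$ and $\eta_\theta/\mu_\theta$ into one log-normal with log-mean $-C_2$ and log-variance $C_1$, combine it with the power-law GML term $U$, and mix with the Bernoulli $\tau_\text{FoV}$. You also make two things explicit that the paper leaves implicit: that $\rho$ must be the correlation of the log-intensities for the $2\rho\sigma_{\text{R},1}\sigma_{\text{R},2}$ term to hold, and the conditioning and completing-the-square step the paper defers to \cite[Appendix~A]{Dabiri_TWC_2022}; your log-mean $-C_2$ is the correct one, whereas the paper's appendix writes $-\frac{1}{2s_\mathrm{m}^2}$ where $-\frac{s_\mathrm{m}^2}{2}$ is meant.
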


\begin{proof}
See Appendix~\ref{Appenx_A}.
\end{proof}

\begin{prop}
    Given $f_{\tau}(\tau)$, the approximation of QBER can be calculated as   
    \begin{align}
    \label{e_mu_m_cl}
    \langle E_{\mu_\text{n}} \rangle
    \!\! = \!\! \frac{\langle E_{\mu_\text{n}} Q_{\mu_\text{n}} \rangle}{\langle Q_{\mu_\text{n}} \rangle}
    \!\! = \!\! \frac{e_0 p'_\mathrm{dark}
    \!\!+ \!\! \big(e_\mathrm{pol} \!\! + \!\! e_0 P_\mathrm{AP}\big)\left[ 1 \!\! - \!\! G(\mu_\text{n}) \right]}
    {p'_\mathrm{dark}  + (1 + P_\mathrm{AP})\left[ 1 - G(\mu_\text{n}) \right]}.
    \end{align}
    where $p_\mathrm{dark}' = p_\mathrm{dark}(1+P_\mathrm{AP})$, $p_\mathrm{dark}$ is the dark count rate, $P_\mathrm{AP}$ is the after pulsing probability caused by the ultra-fast rate, $e_0$ is the error probability caused by noise photon detection events, and $e_\mathrm{pol}$ is the error rate due to polarization error. Additionally, the function $G \left( z \right)$ is expressed as follows
    \begin{align}
    G \left( z \right) \! = \! P_\mathrm0 \! + \! P_\mathrm1 \frac{C_4}{\sqrt{\pi}\,z^{K}} \sum_{i=1}^{n} w_i\,\gamma\!\left(K,\, \!\! z e^{\sqrt{C_1}\sqrt{2}\,x_i - C_5} \right),
    \label{G_approx}
    \end{align}  
    where $\gamma(\cdot,\cdot)$ is the lower incomplete gamma function and $\{x_i,w_i\}_{i=1}^{n}$ are the Gauss--Hermite nodes and weights.
\end{prop}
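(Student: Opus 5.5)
The plan is to average the standard decoy-state BB84 yield and error-yield expressions over the PDT of Lemma~1. I take \cite[(15)--(16)]{Yudai} in their usual form,
\begin{align*}
Q_{\mu_\text{n}}(\tau) &= p'_\mathrm{dark} + (1+P_\mathrm{AP})\left(1 - e^{-\mu_\text{n}\tau}\right),\\
E_{\mu_\text{n}}Q_{\mu_\text{n}}(\tau) &= e_0 p'_\mathrm{dark} + (e_\mathrm{pol} + e_0 P_\mathrm{AP})\left(1 - e^{-\mu_\text{n}\tau}\right),
\end{align*}
with any $(1-p_\mathrm{dark})$ factors absorbed into the constants as the statement suggests. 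Both are affine in $e^{-\mu_\text{n}\tau}$. Since $f_\tau$ integrates to one, the numerator and denominator of \eqref{QBER_exact} reduce to the constant terms plus the corresponding coefficients times $1-G(\mu_\text{n})$, where
\[
G(z) = \mathbb{E}\!\left[e^{-z\tau}\right] = \int_0^\infty e^{-z\tau} f_\tau(\tau)\,\mathrm{d}\tau
\]
is the Laplace transform of the PDT. This immediately gives the ratio \eqref{e_mu_m_cl}. The rest of the proof is the computation of $G$.

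Next I substitute \eqref{eqn:eff_channel}. The Dirac mass at $\tau=0$ contributes exactly $P_0$. The continuous part contributes
\[
P_1 C_4 \int_0^\infty e^{-z\tau}\tau^{K-1} Q\!\left(\frac{\ln\tau + C_5}{\sqrt{C_1}}\right)\mathrm{d}\tau .
\]
To handle the $Q$-function, I will write it as a Gaussian integral,
\[
Q(a) = \frac{1}{\sqrt{2\pi}}\int_a^\infty e^{-t^2/2}\,\mathrm{d}t ,
\]
and then exchange the order of integration by Fubini, which is valid because the integrand is nonnegative. The condition $t \ge (\ln\tau + C_5)/\sqrt{C_1}$ is equivalent to $\tau \le e^{\sqrt{C_1}\,t - C_5}$. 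The inner integral then becomes
\[
\int_0^{u(t)} e^{-z\tau}\tau^{K-1}\,\mathrm{d}\tau = z^{-K}\,\gamma\!\left(K,\, z\,u(t)\right), \qquad u(t) = e^{\sqrt{C_1}\,t - C_5},
\]
by the definition of the lower incomplete gamma function.

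What remains is
\[
\frac{1}{\sqrt{2\pi}}\int_{-\infty}^{\infty} e^{-t^2/2}\,\gamma\!\left(K,\, z e^{\sqrt{C_1}\,t - C_5}\right)\mathrm{d}t .
\]
Substituting $t = \sqrt{2}\,x$ turns the weight into $e^{-x^2}/\sqrt{\pi}$, so Gauss--Hermite quadrature with nodes and weights $\{x_i, w_i\}$ applies directly. This produces \eqref{G_approx}, including its $C_4/(\sqrt{\pi}\,z^K)$ prefactor, and is the only approximation in the result, which is why the proposition is stated as an approximation.

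The main obstacle is pinning down the exact forms of \cite[(15)--(16)]{Yudai}, namely how afterpulsing enters through $P_\mathrm{AP}$ and how $p'_\mathrm{dark}$ is defined, so that the constants match \eqref{e_mu_m_cl}. I will first verify this against the cited model. A secondary check is that the Gauss--Hermite step is accurate: $\gamma(K,\cdot)$ is bounded by $\Gamma(K)$ and smooth in $x$, so the quadrature error vanishes as $n$ grows, which justifies the approximation.
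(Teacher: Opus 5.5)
Your proposal is correct and follows the paper's route. You reduce the QBER to $G(z)=\int_0^\infty e^{-z\tau}f_\tau(\tau)\,\mathrm{d}\tau$, take $P_0$ from the Dirac mass, turn the continuous part of Lemma~1 into a Gaussian-weighted integral of $\gamma\!\left(K, z e^{\sqrt{C_1}t-C_5}\right)$, and apply Gauss--Hermite quadrature. Your step of writing $Q(\cdot)$ as a Gaussian integral and swapping the order of integration is an explicit version of what the paper's sketch compresses into the change of variable $x=(\ln\tau+C_5)/\sqrt{C_1}$ followed by ``applying the lower incomplete gamma function'', and it reproduces the $C_4/(\sqrt{\pi}\,z^{K})$ prefactor and the argument $z e^{\sqrt{2C_1}\,x_i-C_5}$ exactly.
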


\begin{proof}
See Appendix~\ref{Appenx_B}.
\end{proof}

\begin{remark}
    The approximation used in \eqref{G_approx} quickly converges to its exact form expression with $n = 50$ terms.
\end{remark}

\subsection{Average Secret Key Rate (SKR)}
Using the decoy-state BB84 protocol, the information-theoretic lower limit for {\color{black}SKR\footnote{\color{black}This study focuses on performance-oriented analysis, where the SKR is applied under the assumption that the global optical phase of each retro-reflected pulse is random. Practical architecture to cope with random phase attacks will be considered in the future work.}} is given as \cite{trinh2025optical}
\begin{align}
    \label{eqn:skr}
    \mathcal{S}_{\mu_\text{n}} \!\! \ge \!\! \mathcal{R} s p d \Big\{ \!\!
    & - \!\! \langle Q_{\mu_\text{n}} \rangle f_\text{e} H_2(\langle E_{\mu_\text{n}} \rangle) \!\! + \!\! \langle Q_1^{L} \rangle \!\!
    \left[ 1 \!\! - \!\! H_2 \!\! \left(\langle e_1^{U} \rangle\right) \right] \!\! \Big\},
\end{align}
where $\mathcal{R}$, $s$, $p$, $d$, $f_\text{e}$, and $H_2(\cdot)$ denote repetition rate, sifting coefficient, parameter estimation coefficient, fraction of signal pulses, reconciliation efficiency, and Shannon's entropy function, respectively. Also, $\langle Q_1^{L} \rangle = \int_{0}^{\infty} Q_1^{L}(\tau) f(\tau) \mathrm{d} \tau$, where $Q_1^{L}(\tau)$ is in \cite[(G10)]{vasylyev2019satellite}.

\begin{figure*}[!t]
\normalsize
    \begin{align}
    \label{Q_l}
    \begin{split}
    \left\langle Q_1^{L} \right\rangle =
    \frac{\mu_\text{n}^{2} e^{-\mu_\text{n}}}{\mu_\text{n} \nu - \nu^{2}}
    \Bigg[
    \Big( p_{\mathrm{dark}}' + (1+P_{\mathrm{AP}})(1 - G(\nu)) \Big)e^{\nu} 
    -
    \Big( p_{\mathrm{dark}}' + (1+P_{\mathrm{AP}})(1 - G(\mu_\text{n})) \Big)
    e^{\mu_\text{n}}\frac{\nu^{2}}{\mu_\text{n}^{2}} 
    -
    \frac{\mu_\text{n}^{2} - \nu^{2}}{\mu_\text{n}^{2}}\, p_{\mathrm{dark}}
    \Bigg],
    \end{split}
    \end{align}
\hrulefill
\end{figure*}

\begin{lemma}
    The numerical expression of $\langle Q_1^{L} \rangle$ can be found in \eqref{Q_l}, where $\nu$ is the mean photon number of the decoy pulse.
\end{lemma}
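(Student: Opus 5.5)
The plan is to start from the pointwise decoy-state lower bound \cite[(G10)]{vasylyev2019satellite}, observe that it is an affine function of the gains $Q_{\mu_\text{n}}(\tau)$ and $Q_\nu(\tau)$, and then average term by term against $f_\tau(\tau)$. The averages needed are exactly those already computed in the proof of the Proposition, so no new integral has to be evaluated.

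\textbf{Step 1: the pointwise bound.} For the two-decoy BB84 protocol with signal intensity $\mu_\text{n}$, decoy intensity $\nu$, and vacuum yield $Y_0=p_\mathrm{dark}$, the single-photon gain bound reads
\begin{align*}
Q_1^{L}(\tau)=\frac{\mu_\text{n}^{2}e^{-\mu_\text{n}}}{\mu_\text{n}\nu-\nu^{2}}\Big[Q_\nu(\tau)e^{\nu}-Q_{\mu_\text{n}}(\tau)e^{\mu_\text{n}}\frac{\nu^{2}}{\mu_\text{n}^{2}}-\frac{\mu_\text{n}^{2}-\nu^{2}}{\mu_\text{n}^{2}}\,p_\mathrm{dark}\Big].
\end{align*}
The only $\tau$-dependence enters through $Q_{\mu_\text{n}}(\tau)$ and $Q_\nu(\tau)$.

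\textbf{Step 2: the averaged gain.} I will use the gain model of \cite[(16)]{Yudai}, as in the proof of the Proposition:
\begin{align*}
Q_z(\tau)=p'_\mathrm{dark}+(1+P_\mathrm{AP})\big(1-e^{-z\tau}\big).
\end{align*}
The denominator of \eqref{e_mu_m_cl} then gives
\begin{align*}
\langle Q_z\rangle=p'_\mathrm{dark}+(1+P_\mathrm{AP})\big[1-G(z)\big],\qquad G(z)=\int_0^\infty e^{-z\tau}f_\tau(\tau)\,\mathrm d\tau .
\end{align*}
Two ingredients make this step valid:
\begin{itemize}
\item $f_\tau$ from \eqref{eqn:eff_channel} integrates to one. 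The atom $P_0\delta(\tau)$ contributes $P_0$, and the continuous part must carry total mass $P_1$.
\item The Gauss--Hermite form \eqref{G_approx} of $G(z)$ was derived for an arbitrary argument $z>0$, so it applies verbatim to both $z=\mu_\text{n}$ and $z=\nu$.
\end{itemize}

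\textbf{Step 3: averaging the bound.} I will integrate $Q_1^L(\tau)f_\tau(\tau)$ over $[0,\infty)$ and use linearity of the integral. Since the prefactor and $e^{\nu}$, $e^{\mu_\text{n}}\nu^2/\mu_\text{n}^2$ are deterministic, the first two bracketed terms become $\langle Q_\nu\rangle e^\nu$ and $\langle Q_{\mu_\text{n}}\rangle e^{\mu_\text{n}}\nu^2/\mu_\text{n}^2$. The constant dark-count term is unchanged because $f_\tau$ is normalized. Substituting the expressions from Step 2 reproduces \eqref{Q_l} exactly.

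\textbf{Main obstacle.} The only delicate point is the normalization claim in Step 2, which requires confirming that $\int_0^\infty C_4\tau^{K-1}Q\big((\ln\tau+C_5)/\sqrt{C_1}\big)\,\mathrm d\tau=1$. I would check this by substituting $u=\ln\tau$ and integrating by parts against the Gaussian density. The result is $C_4K^{-1}\exp\!\big(-KC_5+K^2C_1/2\big)$, which equals $1$ after inserting the definitions of $C_4$ and $C_5$.

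I would also note that exchanging the average with the bound is legitimate because the bound is affine in the gains. However, the result is the average of the pointwise bound, as the paper defines $\langle Q_1^L\rangle$, and not a bound computed from averaged yields.
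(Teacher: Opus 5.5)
Your proposal is correct and follows the route the paper leaves implicit (it gives no separate proof of this lemma): average the affine pointwise bound (G10) term by term against $f_\tau$, and reuse $\langle Q_z\rangle = p'_\mathrm{dark} + (1+P_\mathrm{AP})[1-G(z)]$ from the proof of the Proposition at $z=\mu_\text{n}$ and $z=\nu$, with the same $f_\tau$ for both intensities since $L_\mathrm{MQW}$ enters only the deterministic factor $\tau_{1,\text{det}}$. Your normalization check is also right, but the distinction in your closing sentence is moot: because the bound is affine, the averaged pointwise bound coincides with the bound evaluated at the averaged gains.
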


To complete \eqref{eqn:skr}, we need to calculate $\langle e_1^{U} \rangle$, which is written as $\langle e_1^{U} \rangle = \frac{1}{\langle Q_1^{L} \rangle} \int_{0}^{\infty} e_1^{U}(\tau) Q_1^{L}(\tau) f(\tau) \mathrm{d}\tau$, where $e_1^{U}(\tau)$ is found in \cite{vasylyev2019satellite}. After several mathematical manipulations, $\langle e_1^{U} \rangle$ is computed as
\begin{equation}
\begin{split}
    \left\langle e_1^{U} \right\rangle
    &\!\! = \!\!
    \frac{
    \left[e_0 p'_\mathrm{dark}
    \!\! + \!\! (e_\mathrm{pol} \!\! + \!\! e_0 P_\mathrm{AP})(1 \!\!- \!\!G(\nu)) \right] e^{\nu}
    \!\! - \!\! e_0 p'_\mathrm{dark}
    }
    {Y_1^{L} \nu},
\end{split}
\end{equation}
where $Y_1^{L}$ is found in \cite[(G9)]{vasylyev2019satellite}, and the others in Lemma~1.

\section{Numerical Results \& Discussions}
\label{sec:results}

This section evaluates the performance of the proposed MRR-aided UAV-based QKD system over the FSO channels. 
{\color{black}
Parameters are listed in Table~\ref{tab:parameters_hybrid}. 
Monte Carlo simulations with $5 \times 10^6$ samples are used to validate the analytical results.
} 

\begin{table}[t]
\centering
\caption{\textsc{System Parameters\cite{Dabiri_TWC_2022, Dabiri_QKD_2025, Yudai}}}
\label{tab:parameters_hybrid}
\footnotesize
\setlength{\tabcolsep}{4pt}
\resizebox{.9\columnwidth}{!}{%
\begin{tabular}{lc|lc}
\hline
\textbf{Symbol} & \textbf{Value} & \textbf{Symbol} & \textbf{Value} \\ \hline
\multicolumn{4}{c}{\textbf{Ground Station (Bob)}} \\ \hline
$Z_{\mathrm{hg}}$ & 10 m & $\sigma_{\theta_e}$ & 200 $\mu$rad \\
$\lambda$ & 1550 nm & $\sigma_{\mathrm{AoA}}$ & 50 $\mu$rad \\
$w_0$ & 10 cm & $\mathcal{R}$ & $10^9$ pulses/s \\
$r_{\mathrm{g}}$ & 8 cm & $e_0$ & 0.5 \\
$\theta_{\mathrm{FoV}}$ & 100 $\mu$rad & $e_{\mathrm{pol}}$ & $0.033$ \\
$\eta$ & 0.6 & $p_{\mathrm{dark}}$ & $10^{-4}$ \\
$L_{\mathrm{g,NPBS_{R/T}}}$ & $0.1\mathrm{R}/0.9\mathrm{T}$ & $P_{\mathrm{AP}}$ & $0.02$ \\ \hline
\multicolumn{4}{c}{\textbf{UAV (Alice)}} \\ \hline
$\theta_{\mathrm{elv}}$ & {\color{black}$60^{\circ}$} & $L_{\mathrm{MQW}}$ & 20 dB \\
$A_{\mathrm{r}}$ & 1 cm$^2$ & $L_{\mathrm{u,NPBS_{R/T}}}$ & $0.5\mathrm{R}/0.5\mathrm{T}$ \\
$\sigma_\mathrm{\theta_o}$ & 2$^\circ$ & $L_{\mathrm{PBS}}$ & $0.5\mathrm{H}/0.5\mathrm{V}$ \\
$\xi_{\mathrm{div,2}}$ & 0.2 mrad & & \\ \hline
\multicolumn{4}{c}{\textbf{Quantum Channel}} \\ \hline
$Z$ & $1$ km & $V$ & 5 m/s \\
{\color{black}$\beta_{{\rm dB}, 1/2}$} & $0.43$ $\mathrm{dB/km}$ & $C_\text{n}^2(0)$ & $10^{-14}$ $\text{m}^{-2/3}$ \\
{\color{black}$\rho$} & {\color{black}0.7} & &  \\
\hline
\multicolumn{4}{c}{\textbf{BB84 Protocol}} \\ \hline
$\mu_\mathrm{n}$ & 0.3 & $\nu_\mathrm{n}$ & 0.09 \\
$s$ & 0.5 & $p$ & 0.75 \\
$f_\mathrm{e}$ & 1 & $d$ & 0.5 \\ \hline
\end{tabular}
}
\end{table}

\begin{figure}[t]
    \centering
    \includegraphics[width=\columnwidth]{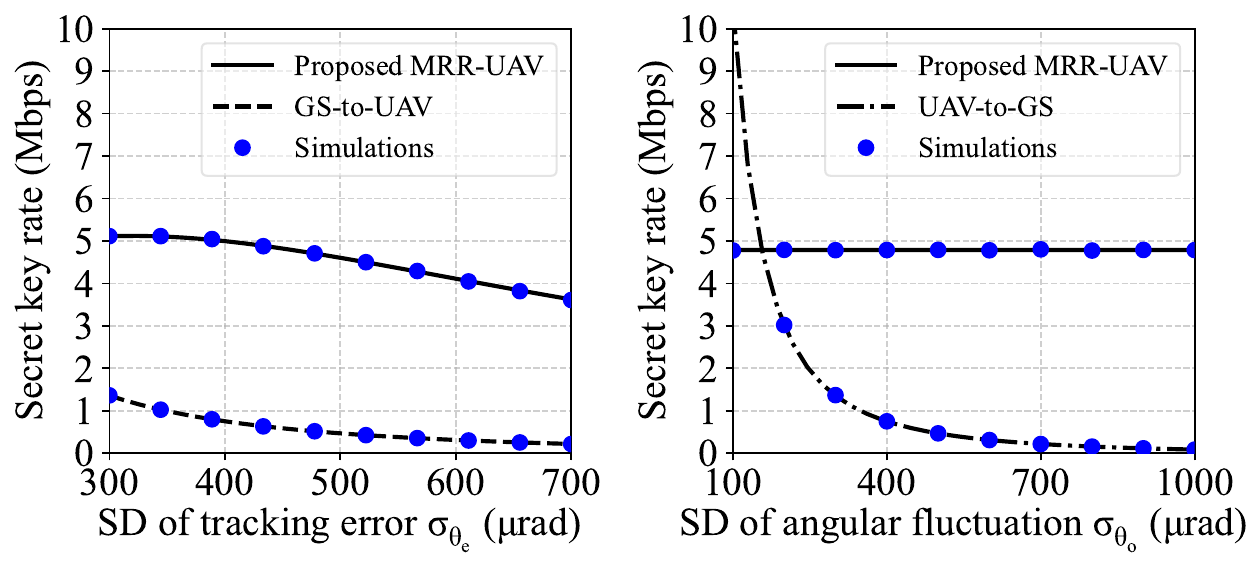}
    \vspace{-8mm}
    \caption{\color{black}SKR performance comparison under (a) GS tracking errors and (b) UAV hovering-induced angular fluctuations. The proposed MRR-UAV system outperforms the benchmark schemes in terms of robustness and SKR.}  
    \vspace{-6mm}
    \label{fig:cmp_w_benchmarks}
\end{figure}


First, we highlight the proposed system by comparing its SKR performance with that of conventional ones, including (i) GS-to-UAV and (ii) UAV-to-GS \cite{Dabiri_QKD_2025}, as depicted in Fig.~\ref{fig:cmp_w_benchmarks}. Particularly, Fig.~\ref{fig:cmp_w_benchmarks}(a) illustrates that the proposed system outperforms the GS-to-UAV system under severe GS tracking errors, since it can employ a broader beam waist to mitigate interrogating beam misalignment. In addition, Fig.~\ref{fig:cmp_w_benchmarks}(b) shows that the severe UAV hovering-induced angular fluctuations significantly degrade the SKR of the UAV-to-GS system, while the SKR of the proposed system remains unchanged. This is because the MRR can retroreflect the signal back along the incident path of the interrogating beam, thereby alleviating the impact of UAV angular fluctuations.
{\color{black}
Furthermore, the proposed system requires only simple additional hardware compared with the sophisticated ATP modules used in UAV-to-GS systems, which can weigh several kilograms \cite{Rabinovich2018free}. This can significantly reduce complexity, size, and weight on the UAV's side. The MQW-based MRR is expected to require only a few watts of power, compared to the tens of watts required by the ATP modules \cite{Rabinovich2018free}.
}
Finally, we verify the correctness of our theoretical framework, validated by simulations.

\begin{figure*}[!t]
    \centering
    \subfloat[]{
        \includegraphics[width=0.32\textwidth]{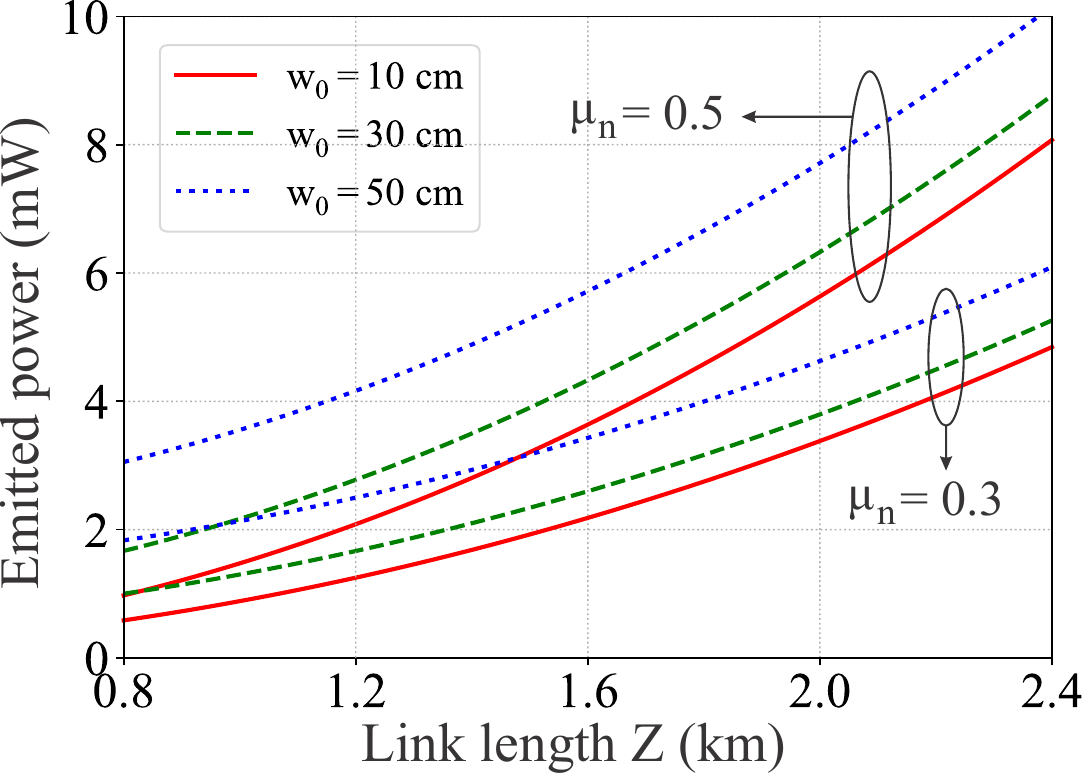}
        \label{fig:emit_power}
    }
    \hfill
    \subfloat[]{
        \includegraphics[width=0.3\textwidth]{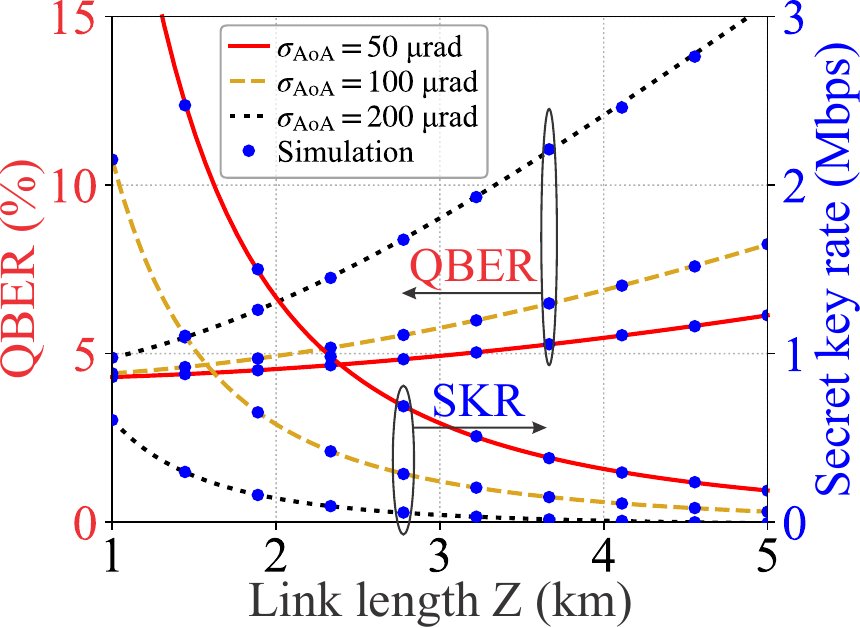}
        \label{fig:aoa_qber}
    }
    \hfill
    \subfloat[]{
        \includegraphics[width=0.32\textwidth]{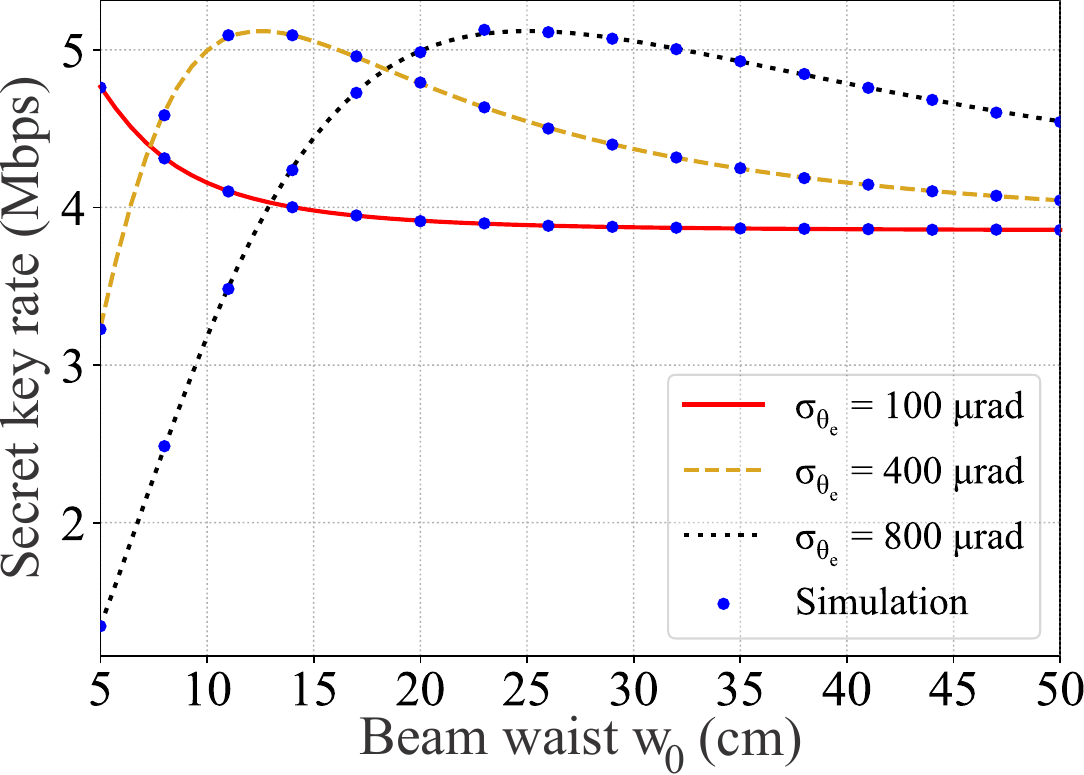}
        \label{fig:theta_skr}
    }
    \caption{\textcolor{black}{(a) Ground-emit power selection, (b) SKR and QBER for different link lengths, and (c) SKR versus beam waist for different pointing error conditions.}}
    \label{fig:trio_fig}
\end{figure*}







Next, we evaluate the performance of the proposed MRR-aided UAV-based FSO/QKD system in Fig.~\ref{fig:trio_fig}. Specifically, using Fig.~\ref{fig:trio_fig}(a), we can select the emission power level at the ground station to control the average number of photons reflected by the MRR over different values of distance $Z$ and beam waist $w_0$. For instance, when $Z$ = 2 km and $w_0$ = 30 cm, to maintain $\mu_\text{n}$ = 0.5 photons/pulse, the emission power should be adjusted as $P_\text{emit}$ = {\color{black}6} mW. Given $\mu_\text{n} = 0.3$ photons/pulse, we now analyze QBER and SKR performance with different link lengths and AoA deviation values in Fig.~\ref{fig:trio_fig}(b). As seen, increasing the link length or AoA deviation reduces the photon count rate. This increases the QBER, thereby lowering the SKR. From the results, we can determine a maximum link length to achieve a determined level of SKR for a given system's $\sigma_{\mathrm{AoA}}$. For instance, to achieve the SKR level of $0.3$ Mbps with a $\sigma_{\mathrm{AoA}}$ = $50~\mu$rad, the maximum operating link length is limited to {\color{black}$4$} km.
Finally, Fig.~\ref{fig:trio_fig}(c) demonstrates the SKR versus beam waist with different pointing error conditions. Using this figure, we can determine the optimal beam waist value to maximize the SKR. For example, the optimal beam waist values are $5$, $15$, and {\color{black}$25$} cm for SD of tracking error values $\sigma_{\mathrm{\theta_e}} = 100, 400, 800$ $\mu$rad, respectively.

\section{Conclusions}
\label{sec:conclusion}
This paper investigated the performance of the MRR-aided UAV-based FSO/QKD system over realistic free-space quantum channels. Analytical expressions for channel PDT, QBER, and SKR were derived. The obtained results demonstrated the feasibility of integrating MRR into UAV-based FSO/QKD systems and confirmed its {\color{black} effectiveness over the conventional UAV-ground} approaches. Furthermore, the results provided useful design insights for selecting system parameters toward practical QKD deployment. {\color{black}Future work will focus on hardware-oriented real-time validation, the effect of correlated channel impairments (e.g., how the UAV’s angular fluctuation effects MRR's photon loss and AoA), and advanced attack models, such as phase randomization and Trojan-horse attacks.}

\appendices
\section{Proof of Lemma 1}
\label{Appenx_A}

The effective channel transmittance can be written as 
\begin{align}
    \tau 
    &= \eta \overbrace{\frac{I_{\text{a},1}}{\mathbb{E}[I_{\text{a},1}]} I_{\text{a},2} \frac{\eta_{\theta}}{\mathbb{E}[\eta_{\theta}]}}^{\tau_{\mathrm{LN}}} \frac{\tau_{\text{p},1}}{\mathbb{E}[\tau_{\text{p},1}]} \tau_{\text{p},2} \tau_{\text{l},2} {\color{black}L_{\mathrm{g,NPBS_{T}}}}  \tau_\text{FoV} = \frac{\tau' \tau_\text{FoV}}{\mathbb{E}[\tau_{\text{p},1}]}.
\end{align}
As $\tau_{\mathrm{LN}}$ are the product of three log-normal random variables, we have {\color{black}$\tau_{\mathrm{LN}} \sim LN(- \frac{\sigma^2_\mathrm{R,1}}{2} - \frac{\sigma^2_\mathrm{R,2}}{2} - \frac{1}{2 s_\mathrm{m}^2}, \sigma^2_\mathrm{R,1} + \sigma^2_\mathrm{R,2} + 2 \rho \sigma_\mathrm{R,1} \sigma_\mathrm{R,2} + s_\mathrm{m}^2)$}.
Thus, we can derive the closed-form expression of $\tau'$ by following the steps in~\cite[Appendix~A]{Dabiri_TWC_2022}. Finally, by incorporating $\tau'$ with the Bernoulli distribution of $\tau_{\mathrm{FoV}}$, we obtain the PDT of $\tau$ as in~\eqref{eqn:eff_channel}.

\section{Proof of Proposition 1}
\label{Appenx_B}

After a few steps of transformation, QBER is given as 
\begin{align}
    \label{eqn:qber_app}
    \langle E_{\mu_\text{n}} \rangle \!\! = \!\! \frac{e_\mathrm0 p'_\mathrm{dark}\mathrm \!\! + \!\! \big(e_\mathrm{pol} \!\! + \!\! e_\mathrm0 P_\mathrm{AP}\big)\big[1 \!\! - \!\! \int_0^\infty \!\!  e^{-\mu_\text{n} \tau} \, \!\! f_\tau(\tau) \mathrm{d} \tau\big]}{p'_\mathrm{dark} + (1 + P_\mathrm{AP})\big[1 - \int_0^\infty  e^{-\mu_\text{n} \tau} \, f_\tau(\tau) \mathrm{d} \tau\big]}.
\end{align}
Let $G \left( \mu_\mathrm{n} \right)$ be the integral in \eqref{eqn:qber_app}. To derive its approximation, we first substitute \eqref{eqn:eff_channel} and apply the change of variable $x = \frac{\ln \tau + C_5}{\sqrt{C_1}}$. Then, by applying the lower incomplete gamma function \cite[(3.9)]{gamma_func_new} and the Gauss-Hermite quadrature approximation, we derive $G \left( z \right)$ as in~\eqref{G_approx}.
\balance

\bibliographystyle{IEEEtran}
\bibliography{references}

@article{UAV_survey_2026,
  title={A Survey on \text{DRL}-Based \text{UAV} Communications and Networking: \text{DRL} Fundamentals, Applications and Implementations},
  author={Zhao, Wei and others},
  journal={IEEE Commun. Surveys Tut.},
  year={2025\color{black}},
  publisher={IEEE}
}

@misc{Dabiri_QKD_2025,
      title={A Unified Framework for \text{UAV}-Based Free-Space Quantum Links: Beam Shaping and Adaptive Field-of-View Control}, 
      author={Mohammad Taghi Dabiri and Mazen Hasna and Saif Al-Kuwari and Khalid Qaraqe},
      year={2025\color{black}},
      eprint={2506.20336},
      archivePrefix={arXiv},
      primaryClass={eess.SP},
      url={https://arxiv.org/abs/2506.20336}, 
}

@ARTICLE{Dabiri_TWC_2022,
  author={Dabiri, Mohammad Taghi and others},
  journal={IEEE Trans. Wireless Commun.}, 
  title={Modulating Retroreflector Based Free Space Optical Link for \text{UAV}-to-Ground Communications}, 
  year={Oct. 2022\color{black}},
  volume={21},
  number={10},
  pages={8631-8645},
}

@article{Vallone2015experimental,
  title={Experimental Satellite Quantum Communications},
  author={Vallone, Giuseppe and others},
  journal={Phys. Rev. Lett.},
  volume={115},
  number={4},
  pages={040502},
  year={Jul. 2015},
}

@article{Rabinovich2018free,
  title={Free space quantum key distribution using modulating retro-reflectors},
  author={Rabinovich, William S and others},
  journal={Optica Opt. Express},
  volume={26},
  number={9},
  pages={11331--11351},
  year={Apr. 2018},
}

@article{ref_8,
  title={Free-Space \text{QKD} with Modulating Retroreflectors Based on the \text{B92} Protocol},
  author={Minghao Zhu and others},
  journal={Entropy},
  year={Jan. 2022\color{black}},
  volume={24},
}

@ARTICLE{Liu_MRR_AO_2017,
  author={\color{black}Changan Liu and others},
  journal={Optica Appl. Opt.}, 
  title={Method to uniformly diverge a reflected beam with a plano-concave lens located in front of a cube-corner retroreflector}, 
  year={Feb. 2017},
  volume={56},
  number={5},
  pages={1333-1338},
}

@INPROCEEDINGS{Yudai,
  author={Takihara, Yudai and Le, Hoang D. and Nguyen, Cuong T. and Pham, Anh T.},
  booktitle={IEEE Int. Conf. Advanced Technol. Commun.}, 
  title={Qiskit-Enabled Simulation of Quantum Key Distribution over Starlink Satellite Networks}, 
  year={2025},
  volume={},
  number={},
  pages={1-6\color{black}},
}

@article{trinh2025optical,
  title={{Optical RISs improve the secret key rate of free-space QKD in HAP-to-UAV scenarios}},
  author={Trinh, Phuc V and Sugiura, Shinya and Xu, Chao and Hanzo, Lajos},
  journal={IEEE J. Sel. Areas Commun.},
  volume={43},
  number={8},
  pages={2747-2764},
  year={Aug. 2025\color{black}},
  publisher={IEEE}
}

@article{vasylyev2019satellite,
  title     = {Satellite-mediated quantum atmospheric links},
  author    = {\color{black}Vasylyev, Dmytro and Vogel, W and Moll, Florian},
  journal   = {Phys. Rev. A},
  volume    = {99},
  number    = {5},
  pages     = {053830},
  year      = {May 2019\color{black}},
  publisher = {APS}
}

@article{trinh2021experimental,
  title={Experimental channel statistics of drone-to-ground retro-reflected \text{FSO} links with fine-tracking systems},
  author={Trinh, Phuc V. and others},
  journal={IEEE Access},
  volume={9},
  pages={137148--137164},
  year={Oct. 2021},
  publisher={IEEE}
}

@article{dabiri2025novel,
  title={A novel {MRR-UAV} based relay with optical network coding: A comparative study with optical {IRS} and conventional {UAV} relaying},
  author={Dabiri, Mohammad Taghi and Hasna, Mazen},
  journal={IEEE J. Sel. Areas Commun.},
  volume={43},
  number={5},
  pages={1607--1620},
  year={May 2025\color{black}},
  publisher={IEEE}
}

@article{ismail2025finite,
  title   = {Finite-Size and Modulation Optimization in {CV-QKD} over {UAV}-Based {FSO} Link with Adaptive Optics},
  author  = {Ismail, Tawfik and Sabeeh, Ala H. and Yasser, Mahmoud and Alshaer, Nancy},
  journal = {IEEE Commun. Lett.},
  year    = {Dec. 2025\color{black}},
  volume  = {29},
  number  = {12},
  pages   = {1--5},
  publisher = {IEEE}
}

@ARTICLE{why_mqw_2021,
  author={Crisanto Quintana and others},
  journal={IEEE/Optica J. Lightw. Technol.}, 
  title={A High Speed Retro-Reflective Free Space Optics Links With \text{UAV}}, 
  year={2021\color{black}},
  volume={39},
  number={18},
  pages={5699-5705}
}

@article{ata2022performance,
  title={Performance of integrated ground-air-space {FSO} links over various turbulent environments},
  author={Ata, Yal{\c{c}}{\i}n and Alouini, Mohamed-Slim},
  journal={IEEE Photonics J.},
  volume={14},
  number={6},
  pages={1--16},
  year={Dec. 2022 \color{black}},
  publisher={IEEE}
}

@Inbook{gamma_func_new,
author="Fraczek, Markus Szymon",
title="The Gamma Function and the Incomplete Gamma Functions",
bookTitle="Selberg Zeta Functions and Transfer Operators: An Experimental Approach to Singular Perturbations",
year="2017",
publisher="Springer International Publishing",
address="Cham",
pages="39--42",
isbn="978-3-319-51296-9",
doi="10.1007/978-3-319-51296-9_3",
url="https://doi.org/10.1007/978-3-319-51296-9_3"
}

\end{document}